\documentclass[pra, twocolumn]{revtex4-1}
\usepackage{graphicx}
\usepackage{hyperref}
\usepackage[section]{placeins}
\usepackage{amsmath}
\usepackage{amssymb}
\usepackage{amsthm}
\usepackage{dsfont}
\usepackage{bbold}
\usepackage[caption=false]{subfig}
\usepackage{verbatim}
\setlength{\paperheight}{11in}

%%
%%

             % distinguishabilty et Dirac's notation %
             % Bra Dirac's notation  %
 % braket Dirac's notation%
                  % quotation %
               % vector %
            % operator %

%%
%%

\newtheorem*{lemma*}{Lemma (Ideal detection)}
\newtheorem*{theorem*}{Theorem (Noisy detection)}
\newtheorem*{corollary*}{Corollary (Noiseless detection)}

\begin{document}

\title{Theory of multiplexed photon number discrimination}

\author{Filippo.~M.~Miatto$^{1}$, Akbar Safari$^{1}$, Robert W.~Boyd$^{1,2}$}
\affiliation{$^1$Dept.~of Physics, University of Ottawa, 150 Louis Pasteur, Ottawa, Ontario, K1N 6N5 Canada}
\affiliation{$^2$Institute of Optics, University of Rochester, Rochester, USA}
\date{\today}

\begin{abstract}
Although some non-trivial photon number resolving detectors exist, it may still be convenient to discriminate photon number states with the method of multiplexed detection. Multiplexing can be performed with paths in real space, with paths in time, and in principle with any degree of freedom that has a sufficient number of eigenstates and that can be coupled to the photon number. Previous works have addressed the probabilities involved in these measurements with Monte Carlo simulations, or by restricting the number of detectors to powers of 2, or without including quantum efficiency or noise.
In this work we find an analytical expression of the detection probabilities for any number of input photons and any number of on/off photon detectors with a quantum efficiency $0\%\leq\eta\leq100\%$ and a false count probability $\varepsilon\geq0$. This allows us to retrodict the number of photons that we had at the input in the least unbiased way possible. We conclude our work with some examples.
\end{abstract}
\maketitle

\section{Introduction}
For practical applications of quantum optics it would be a great advantage to have a detector that can discriminate between different photon number states \cite{silberhorn2007detecting}. There are currently several different solutions that allow one to achieve this to some extent \cite{miller2003demonstration, fujiwara2005multiphoton, gansen2007photon, kardynal2007photon, divochiy2008superconducting, namekata2010non}, but the resources that such detectors require (such as very low temperatures, costly materials and/or optical configurations) make them rather costly to obtain and operate. There are workarounds that involve squeezing more information out of the conventional detectors \cite{kim1999multiphoton,kardynal2008avalanche}, or by multiplexing the photons towards multiple single-photon detectors \cite{achilles2003fiber, fitch2003photon, achilles2004photon, jiang2007photon, lantz2008multi,avenhaus2010accessing,thomas2012practical}.

The most common single-photon detectors are only able to tell us whether they detected ``zero photons'' or ``more than zero photons''. Furthermore they are subject to noise and a sub-optimal efficiency, which means that sometimes they click when they shouldn't have clicked or that they fail to click when they should have clicked \cite{hadfield2009single}.

In this work we explore photon-number discrimination by multiplexing, our novel contribution is to take into account quantum efficiency and noise, as well as any number of detectors.

\section{Discrimination probability} 
Consider a linear device that converts $D$ inputs into $D$ outputs. A single-mode input then becomes
\begin{align}
\hat a_\mathrm{in}^\dagger\rightarrow\sum_{j=1}^D\beta_j \hat b_j^\dagger,
\end{align}
where the vector with entries $\beta_j$ corresponds to a row of the corresponding transformation matrix.
If the device is balanced, we have $|\beta_j|^2=\frac1D$. A possible physical model for this device can be a cascaded sequence of $D-1$ conventional beamsplitters, with reflectivities $\frac{1}{D}$, $\frac{1}{D-1}$ \dots $\frac{1}{2}$, but other possibilities exist, including on-chip solutions. We note that all-optical solutions are just one area of applicability of our results, which can be applied to any multiplexer with a final set of detectors, which can be even as large as the set of pixels in an EMCCD or an ICCD. Configurations of the multiplexing part with closed paths are to be avoided, because the bosonic nature of photons would make them bunch and bypass the loops. For the same reason, we are not required to take phases into account. The multiplexer finally couples to a set of on/off single photon detectors. We wish to calculate the probability of observing $C$ clicks, given an initial photon number state of $N$ photons and given that all $D$ detectors have a quantum efficiency $\eta$ and a dark count probability $\varepsilon$. We start from the ideal case $\eta=1,\ \varepsilon=0$ and then move on to the general case $0\leq\eta\leq1,\ 0\leq\varepsilon\leq1$ and from the general case we retrieve a simple Corollary that holds for $0\leq\eta\leq1,\ \varepsilon\ll N/D$.

\subsection{Ideal detectors}
The fundamental ingredient for our analysis is the probability of distributing $N$ photons into exactly $C$ out of $D$ detectors. We start by numbering the detectors from 1 to $D$, then a certain string of numbers will describe an event, where the detectors numbered in the string are the ones that clicked. Note that in absence of noise the number of events cannot exceed the number of input photons, i.e. $C\leq N$. 

\begin{lemma*}
\label{lemma}
The probability of observing $C$ clicks by distributing a Fock state of $N$ photons evenly amongst $D$ ideal (i.e. noiseless and with 100\% quantum efficiency) on/off detectors is given by $$P_D(C|N)={D\choose C}\frac{C!}{D^N}\mathcal{S}_N^C,$$ where $\mathcal{S}$ is the Stirling number of second kind.
\end{lemma*}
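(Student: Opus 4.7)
The plan is to treat this as an elementary balls-in-boxes counting problem, exploiting the fact that the excerpt has already established two simplifying facts. First, because the device is balanced, each photon reaches detector $j$ with probability $|\beta_j|^2 = 1/D$. Second, because the paper has ruled out closed loops in the multiplexer and argued that phases therefore do not matter, no bosonic interference terms survive: the $N$ photons may be treated as behaving like independent, classically distributed particles, each sent to one of the $D$ detectors uniformly at random.

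Given this, I would proceed as follows. First, I set up the sample space: the elementary outcomes are functions $f:\{1,\dots,N\}\to\{1,\dots,D\}$ sending each photon to a detector, and each of the $D^N$ such functions occurs with equal probability $1/D^N$. Second, I enumerate the favourable outcomes for the event \emph{exactly $C$ detectors click}. This requires (i) choosing which $C$ of the $D$ detectors register at least one photon, giving a factor $\binom{D}{C}$, and (ii) counting the maps from $\{1,\dots,N\}$ onto this fixed $C$-element subset, i.e.\ the number of surjections from an $N$-set to a $C$-set. Third, I invoke the standard combinatorial identity that the number of surjections $[N]\twoheadrightarrow[C]$ equals $C!\,\mathcal{S}_N^C$, which is essentially the definition of the Stirling number of the second kind (partitions of $N$ labelled items into $C$ unlabelled nonempty blocks, multiplied by $C!$ to label the blocks). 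Combining these factors gives
\begin{equation*}
P_D(C|N) \;=\; \frac{1}{D^N}\binom{D}{C} C!\,\mathcal{S}_N^C,
\end{equation*}
which is the claimed expression.

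I do not anticipate a serious technical obstacle in the counting itself, since each step is standard. The only point that deserves a brief justification rather than a calculation is the reduction from a genuinely quantum, multimode interference problem to the classical independent-placement model. The paper has already supplied this justification (no loops, hence no bunching, hence no phase-sensitive terms), so I would state it explicitly as the starting assumption of the proof and then carry out the combinatorics. A secondary point worth mentioning is the boundary case $C>N$: the Stirling number $\mathcal{S}_N^C$ vanishes there, automatically encoding the physical constraint $C\leq N$ noted just above the lemma statement.
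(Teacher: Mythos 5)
Your proof is correct and follows essentially the same route as the paper: both identify the sample space of $D^N$ equally likely photon-to-detector assignments, factor the favourable outcomes as $\binom{D}{C}$ choices of clicking detectors times the number of surjections onto those $C$ detectors, and identify that count as $C!\,\mathcal{S}_N^C$. The only difference is presentational — the paper derives the surjection count from scratch via inclusion--exclusion on the sets of strings missing given detectors, whereas you cite it as the standard identity — and your closing remarks on the classical-placement reduction and the vanishing of $\mathcal{S}_N^C$ for $C>N$ are sound.
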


\begin{proof}
Our goal is to compute the number of detection strings (i.e. the strings of numbers describing a detection event) that include exactly $C$ out of $D$ detectors.

Call $S_i$ the set of strings corresponding to $N$ input photons that do not include the $i$-th detector. Then select a specific subset $\mathcal K$ of cardinality $|\mathcal K|=k$ from the $D$ detectors. The set of strings that do not include any of the detectors in $\mathcal K$ is the intersection of the sets excluding each of the elements of $\mathcal K$: $\bigcap_{i\in \mathcal K}S_i$ and its cardinality is 
\begin{align}
\left|\bigcap_{i\in \mathcal K}S_i\right|=(D-k)^N
\end{align}
as we have $N$ choices with repetition, from $(D-k)$ possible detectors. Of course, we are also counting strings that exclude \emph{any other} detector, in addition to the ones in  $\mathcal K$. To get around this problem, we use the inclusion-exclusion rule to count the elements in unions of sets $S_i$. In particular, we need the union of $S_i$ for $i\in\{1,\dots,D\}$, i.e. the set of all strings that exclude \emph{at least} 1 detector, whose cardinality is
\begin{align}
\left|\bigcup_{i=1}^D S_i\right|=\sum_{j=1}^D(-1)^{j+1}{D\choose j}(D-j)^N
\end{align}
The complement of this set is the set of strings that include all $D$ detectors (if they missed any they would fall in $\bigcup_{i=1}^D S_i$), so by De Morgan's law we have
\begin{align}
\left|\overline{\bigcup_{i=1}^D S_i}\right|=\left|\bigcap_{i=1}^D\overline{S_i}\right|=\sum_{j=0}^D(-1)^{j}{D\choose j}(D-j)^N
\end{align}
Finally, we can compute the number of strings that include precisely $C$ out of $D$ detectors: pick $D-C$ detectors to be excluded (there are $D\choose C$ ways of doing this) and compute the number of strings that include all of the remaining $C$ detectors:
\begin{align}
{D\choose C}\left|\bigcap_{i=1}^C\overline{S_i}\right|&={D\choose C}\sum_{j=0}^{C}(-1)^{j}{C\choose j}(C-j)^N\\
&={D\choose C}C!\,\mathcal{S}_N^C,
\end{align}
where $\mathcal{S}_N^C$ is the Stirling number of the second kind.
So the probability of ending up with exactly $C$ clicks is the result above divided by the total number of possible strings $D^N$:
\begin{align}
P_D(C|N)&={D\choose C}\frac{C!}{D^N}\mathcal{S}_N^C
\label{stirling}
\end{align}
and our proof is complete.
\end{proof}

\subsection{Nonideal detectors}
Nonideal detectors are subject to mainly two effects: sub-unity quantum efficiency and noise, which can come from various sources. We model these as Bernoulli trials, where for each detector we have a probability $\eta$ of missing the photon and a probability $\varepsilon$ of a false count within the measurement window, in which case we learn that the detector clicked regardless of a photon hitting it or not (we don't worry about the source of noise, be it a dark count where the detector really fires albeit for no reason, or just electronic noise where we are informed of a click without it necessarily happening). Whether a detector detects an actual photon or gives a false count, we consider it out of order until the electronics have enough time to reset (e.g. about 40 ns for avalanche photodiodes). In this section we take both of these effects into account.
\begin{widetext}
\begin{theorem*}
The probability of observing $C$ clicks by distributing a Fock state of $N$ photons evenly amongst $D$ on/off detectors with quantum efficiency $\eta$ and false count probability $\varepsilon$ is given by $$P_{D,\eta,\varepsilon}(C|N)=\sum_{i=0}^C  p_\varepsilon(i|D)\sum_{j=C-i}^N  p_{\frac{D-i}{D}}(j|N)\sum_{k=C-i}^j p_\eta(k|j)P_{D-i}(C-i|k),$$
where $p_\xi(m|n)={n\choose m}\xi^m(1-\xi)^{n-m}$ is the probability of having $m$ successes out of $n$ trials when the success probability of a single trial is $\xi$.
\end{theorem*}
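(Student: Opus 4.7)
The plan is to split the observed click count into contributions from false counts and from genuine photon detections, by conditioning in three nested stages and then invoking the Ideal Detection Lemma for the last, purely ideal, sub-problem. First, let $i$ be the number of detectors whose false-count Bernoulli trial has fired: since false counts are independent across detectors, this quantity is binomial and contributes the factor $p_\varepsilon(i|D)$. Those $i$ detectors will click no matter what, so the remaining $C-i$ observed clicks must be produced by the other $D-i$ detectors through actual photon detection; any photon that happens to land on a false-count detector is redundant and can be discarded without changing the click count.

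Next I would condition on how the $N$ photons split between the two kinds of detectors. Because each photon independently chooses a detector uniformly with probability $1/D$, the number $j$ of photons that reach one of the $D-i$ non-false-count detectors is binomial with success probability $(D-i)/D$, contributing $p_{(D-i)/D}(j|N)$. I would then apply a per-photon Bernoulli quantum-efficiency filter to these $j$ photons, keeping each with probability $\eta$; this leaves $k$ surviving photons with probability $p_\eta(k|j)$. A small but essential observation is that both the uniform photon allocation and the independent per-photon loss preserve uniformity, so conditional on $k$ the surviving photons are uniformly distributed over the $D-i$ good detectors, which is exactly the setting of the Ideal Detection Lemma. Applying that lemma to the sub-problem $(D-i,\,k,\,C-i)$ yields $P_{D-i}(C-i|k)$, and summing over $i,j,k$ with the natural ranges $0\le i\le C$, $C-i\le j\le N$, and $C-i\le k\le j$ reproduces the stated expression.

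The main subtlety I expect is justifying that the false-count detectors can be removed from the sub-problem without side effects. This rests on two facts: a detector that is already firing spuriously is effectively blind to any further photon arrivals (as noted in the recovery-time discussion preceding the theorem), and any photons it does absorb cannot create additional clicks since that detector already contributes its single click. Granting this, and noting that uniformity of photon positions is preserved both by conditioning on a fixed subset of detectors and by independent per-photon loss, the theorem reduces to chaining three elementary binomial convolutions with the already-proven Ideal Detection Lemma, with no further combinatorial work to perform.
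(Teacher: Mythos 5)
Your proposal is correct and follows essentially the same route as the paper: the same three-stage conditioning (spurious clicks $i$, photons $j$ reaching active detectors, survivors $k$ after the efficiency filter) followed by an application of the Ideal Detection Lemma to the reduced problem $(D-i,k,C-i)$. Your added remarks on discarding photons that land on already-fired detectors and on the preservation of uniformity under conditioning and loss make explicit two points the paper leaves implicit, but they do not change the argument.
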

\end{widetext}

\begin{proof}
The proof comprises of 3 steps, each of which is of a similar nature: we consider in which ways an event can happen and we sum the relative probabilities. In the first step we split the observed number of clicks into spurious and real clicks. In the second step we split the initial photons into those that landed onto inactive detectors (the noisy ones) and those that landed onto active ones. In the third step we split the photons that landed onto active detectors into those that made it past the quantum efficiency and those that didn't. Finally, we use the ideal detection Lemma.

\emph{Step 1} We sum over the probability of obtaining $C$ total clicks by having $i$ of them come from noise and $C-i$ come from actual detections. We write the probability of $i$ false events given $D$ detectors as $p_\varepsilon(i|D)={D\choose i}\varepsilon^i(1-\varepsilon)^{D-i}$.

\emph{Step 2} Now $C-i$ clicks must come from real detection events from the remaining $D-i$ active detectors.
The probability that $j$ out of $N$ photons make to the $D-i$ active detectors is $p_{\frac{D-i}{D}}(j|N)$.

\emph{Step 3} As our detectors have a quantum efficiency $\eta\leq1$, the probability of remaining with $k$ out of $j$ photons is given by $p_\eta(k|j)$.

Now we can now apply the Lemma to write the probability of detecting $C-i$ out of $k$ survivor photons with $D-i$ detectors and combine these steps in the final result.
\end{proof}

There is a simple corollary of this theorem, which describes the case $\varepsilon=0$. Such corollary can be used even for noisy detectors as long as $D\varepsilon\ll N$:
\begin{corollary*}
The probability of observing $C$ clicks by distributing a Fock state of $N$ photons evenly amongst $D$ noiseless on/off detectors with quantum efficiency $\eta$ is given by $$P_{D,\eta}(C|N)=\sum_{k=C}^N p_\eta(k|N)P_{D}(C|k).$$
\end{corollary*}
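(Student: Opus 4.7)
I would obtain the Corollary as a direct specialization of the Theorem to $\varepsilon=0$. The entire content is the collapse of two degenerate Bernoulli factors into Kronecker deltas, so the triple sum reduces to a single sum. No genuinely hard step is involved; the only thing to be careful about is which of the indices gets pinned and in what order.

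\emph{Step 1 (collapse the noise sum).} Setting $\varepsilon=0$ in
$p_\varepsilon(i|D)=\binom{D}{i}\varepsilon^i(1-\varepsilon)^{D-i}$ gives $0^i\cdot 1^{\,D-i}$, which vanishes for $i\geq 1$ and equals $1$ for $i=0$. Hence $p_0(i|D)=\delta_{i,0}$ and the outer sum of the Theorem collapses to its $i=0$ term. After this pruning the expression reads
$$P_{D,\eta}(C|N)=\sum_{j=C}^{N}p_{1}(j|N)\sum_{k=C}^{j}p_\eta(k|j)\,P_D(C|k).$$

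\emph{Step 2 (collapse the routing sum).} With $i=0$ the routing probability $p_{(D-i)/D}(j|N)$ is $p_1(j|N)=\binom{N}{j}1^{j}0^{\,N-j}=\delta_{j,N}$, since every photon reaches an active detector with certainty. So the sum over $j$ collapses to $j=N$, and what remains is exactly
$$P_{D,\eta}(C|N)=\sum_{k=C}^{N}p_\eta(k|N)\,P_D(C|k),$$
which is the claim.

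\emph{Alternative self-contained route.} If one prefers not to invoke the Theorem, essentially the same proof can be assembled from the Lemma alone: with $\varepsilon=0$ every click comes from a real photon, the number $k$ of photons that survive the efficiency stage is binomial with parameters $(N,\eta)$ giving the weight $p_\eta(k|N)$, and conditional on $k$ survivors the Lemma provides $P_D(C|k)$ for the probability of exactly $C$ clicks. Summing over $k\geq C$ (fewer survivors cannot produce $C$ clicks) reproduces the formula. The main obstacle in either route is simply recognising the two degenerate Bernoullis; once they are identified the derivation is a one-line check.
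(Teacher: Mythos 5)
Your proof is correct and follows exactly the paper's route: specializing the Theorem to $\varepsilon=0$ via $p_0(i|D)=\delta_{i,0}$ and $p_{1}(j|N)=\delta_{j,N}$ to collapse the first two sums. The alternative self-contained argument from the Lemma is a nice bonus but not needed.
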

\begin{proof}
We use the identity $p_0(m|n)=\delta_{m,0}$ to replace every occurrence of $i$ in the noisy detection Theorem by 0, and the identity $p_1(m|n)=\delta_{m,n}$ to replace every occurrence of $j$ by $N$. This gets rid of the first two summations and the result follows.
\end{proof}

\section{retrodicting the photon number}
To retrodict the photon number \emph{given} an observed number of clicks, we have to invert the probability in the main theorem using Bayes' rule:
\begin{align}
P_{D,\eta,\varepsilon}(N|C)=\frac{P_{D,\eta,\varepsilon}(C|N)Pr(N)}{\sum_k P_{D,\eta,\varepsilon}(C|k)Pr(k)}
\label{bayes}
\end{align}
This general formula is always valid, but it cannot be solved explicitly unless we specify the prior, which is what we will do next, for some special cases of particular relevance.

\subsection{Poisson prior}
In case of a Poissonian prior with mean photon number $\mu$ (which may occur when we deal with coherent states):
\begin{align}
Pr(N)=\frac{\mu^N e^{-\mu}}{N!},
\end{align}
we can find an explicit expression for the ideal retrodiction probability:
\begin{align}
P_{D}^\mathrm{Poisson}(N|C)=\frac{C!\,\mathcal{S}_N^C}{N!\,\gamma^{N}}\frac{1}{(e^{1/\gamma}-1)^{C}}
\end{align}
where $\gamma=D/\mu$.

\subsection{Thermal prior}
In case of a thermal prior with mean photon number $\mu$ (which occurs for instance for two-mode squeezed vacuum states or for EPR states)
\begin{align}
Pr(N)=\frac{\mu^N}{(\mu+1)^{N+1}},
\end{align}
the ideal retrodiction probability can be written as:
\begin{align}
\label{thermal}
P_{D}^\mathrm{Therm}(N|C)=\frac{C!\,\mathcal{S}_N^C}{(D+\gamma)^{N}}\frac{\Gamma(D+\gamma)}{\Gamma(D+\gamma-C)!}
\end{align}

\subsection{Considerations}
When one moves away from the ideal case, a sub-unity quantum efficiency plays a fundamental role, while the number of detectors is typically less important. One finds that the probability of detecting all the input photons with a noiseless apparatus, saturates at a value lower than 1 even for an infinite number of detectors:
\begin{align}
\lim_{D\rightarrow\infty}P_{D,\eta}(N|N)=\eta^N\lim_{D\rightarrow\infty}P_{D}(N|N)=\eta^N
\end{align}

The effect of noise in the detectors is tangible only when their number is sufficiently large, for instance when the number of spurious counts is comparable with the actual number of photons hitting the detectors i.e. when $D\varepsilon\approx N$.

\section{Applications}
We now would like to give a few examples of how to apply our results. The examples will be retrodiction of photon number for heralding quantum states.

\subsection{Example 1: heralding of a NOON state}
For this example we consider the following setup: we replace the two mirrors in the middle of a Mach-Zehnder (MZ) interferometer with 50:50 beam splitters and add detectors to measure the photons that leak. This configuration (if the phase difference between the two arms of the MZ is set to $\pi/2$) will output a $(|4,0\rangle+|0,4\rangle)/\sqrt{2}$ state if we start with the state $|3,3\rangle$ and if each of the two detectors measures exactly 1 photon.

Now the question is how well do we know that we had exactly 1 photon at the detectors? If we resort to multiplexed detection, we first need to compute the prior joint probability $Pr(N_1,N_2)$ of having $N_1$ photons at detector 1 and $N_2$ photons at detector 2. This is achieved using simple input-output relations for 50:50 beam splitters; we report it in Fig.\ref{table}.

\begin{figure}[h]
\begin{center}
\includegraphics[width=0.7\columnwidth]{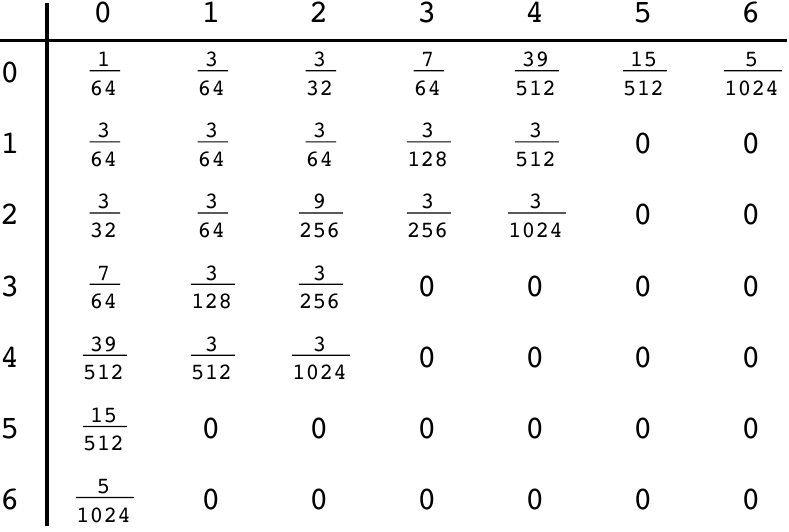}
\caption{\label{table}Joint probabilities of having $(i,j)$ photons (where $i$ and $j$ are listed in the headings on top and on the left) at the detectors in the modified MZ interferometer of the NOON state heralding example. These are computed assuming the input $|3,3\rangle$.}
\end{center}
\end{figure}

Then, we apply Bayes' rule (assuming that the two sets of multiplexed detectors are identical, but we could easily modify the equation below to account for different configurations) and find $P_{D,\eta,\varepsilon}(N_1,N_2|C_1,C_2)$ to be given by
\begin{align}
\frac{P_{D,\eta,\varepsilon}(C_1|N_1)P_{D,\eta,\varepsilon}(C_2|N_2)Pr(N_1,N_2)}{\sum_{k_1,k_2} P_{D,\eta,\varepsilon}(C_1|k_1)P_{D,\eta,\varepsilon}(C_2|k_2)Pr(k_1,k_2)}
\end{align}
We finally use the quantity $P_{D,\eta,\varepsilon}(N_1,N_2|C_1,C_2)$ to infer the retrodictive power of our multiplexed detectors. To complete the example, in Fig.~\ref{retroNOON} we plot the retrodicted probabilities of four configurations: 4 and 16 detectors with 60\% and 75\% quantum efficiency (and 500 dark counts/sec, with 10 ns gated measurement window), given that they both reported a single click each.

\begin{figure}[h]
\begin{tabular}{cc}
\subfloat[4 detectors, 60\% QE]{\includegraphics[width = .5\columnwidth]{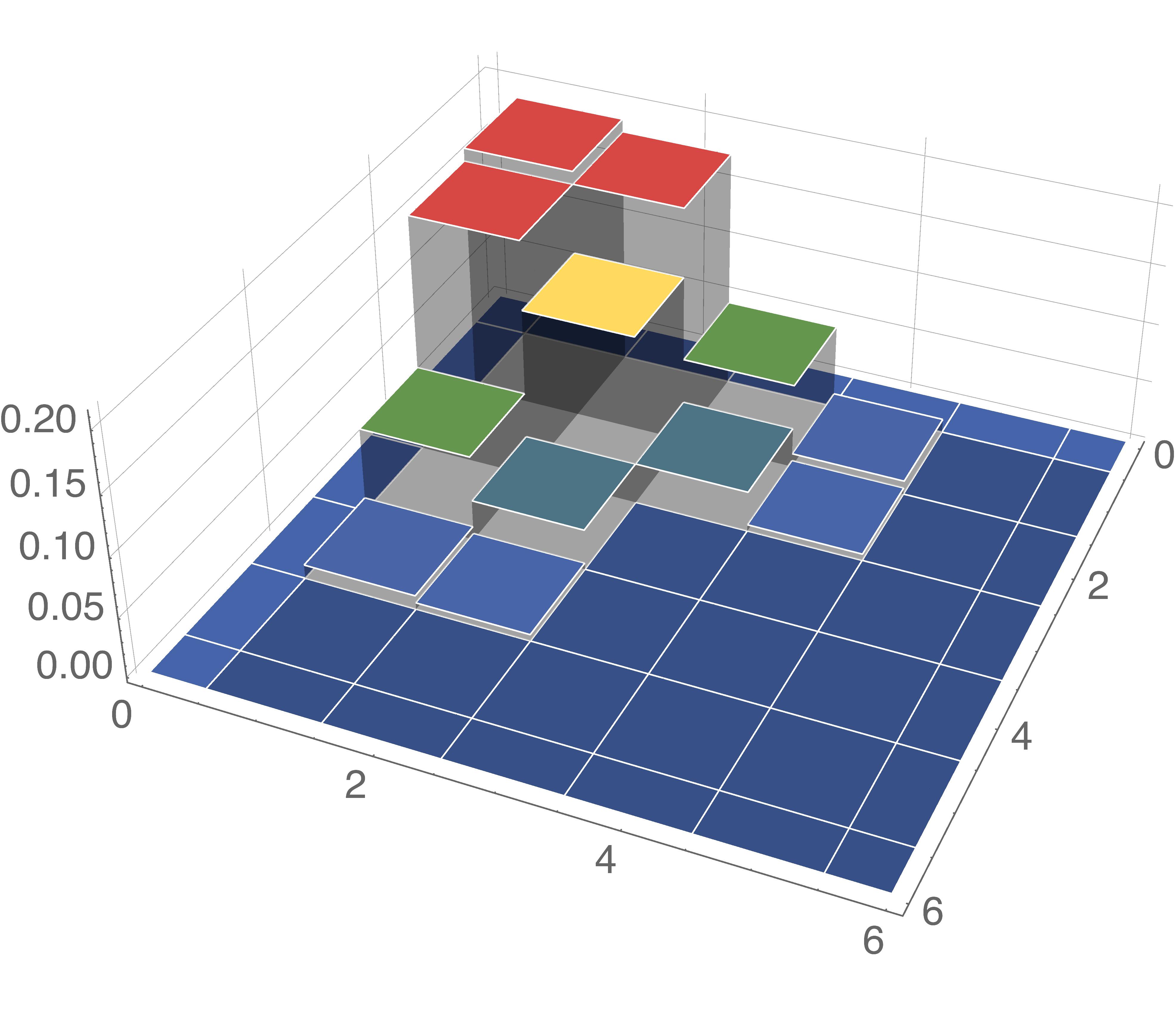}} &
\subfloat[4 detectors, 75\% QE]{\includegraphics[width = .5\columnwidth]{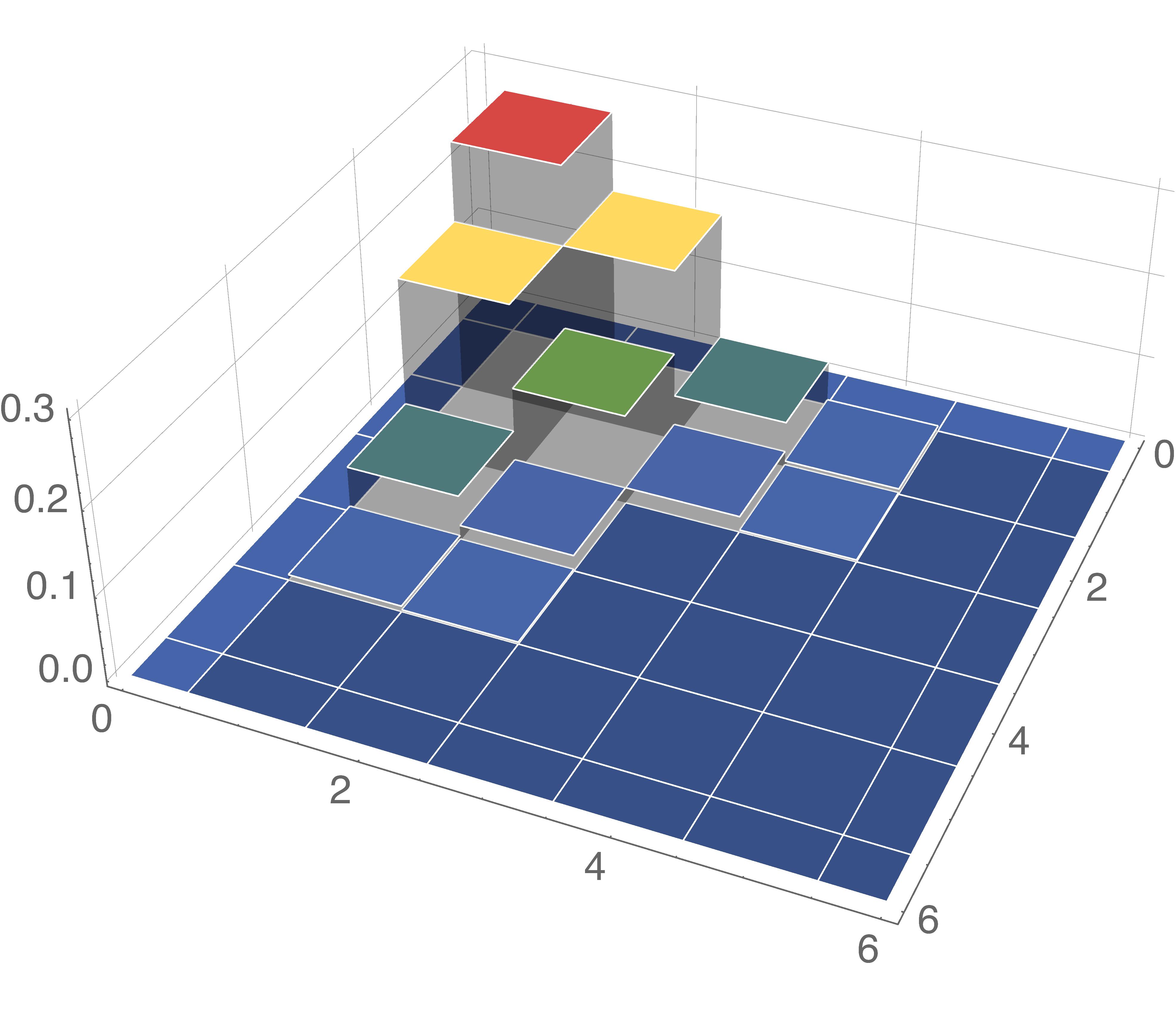}} \\
\subfloat[64 detectors, 60\% QE]{\includegraphics[width = .5\columnwidth]{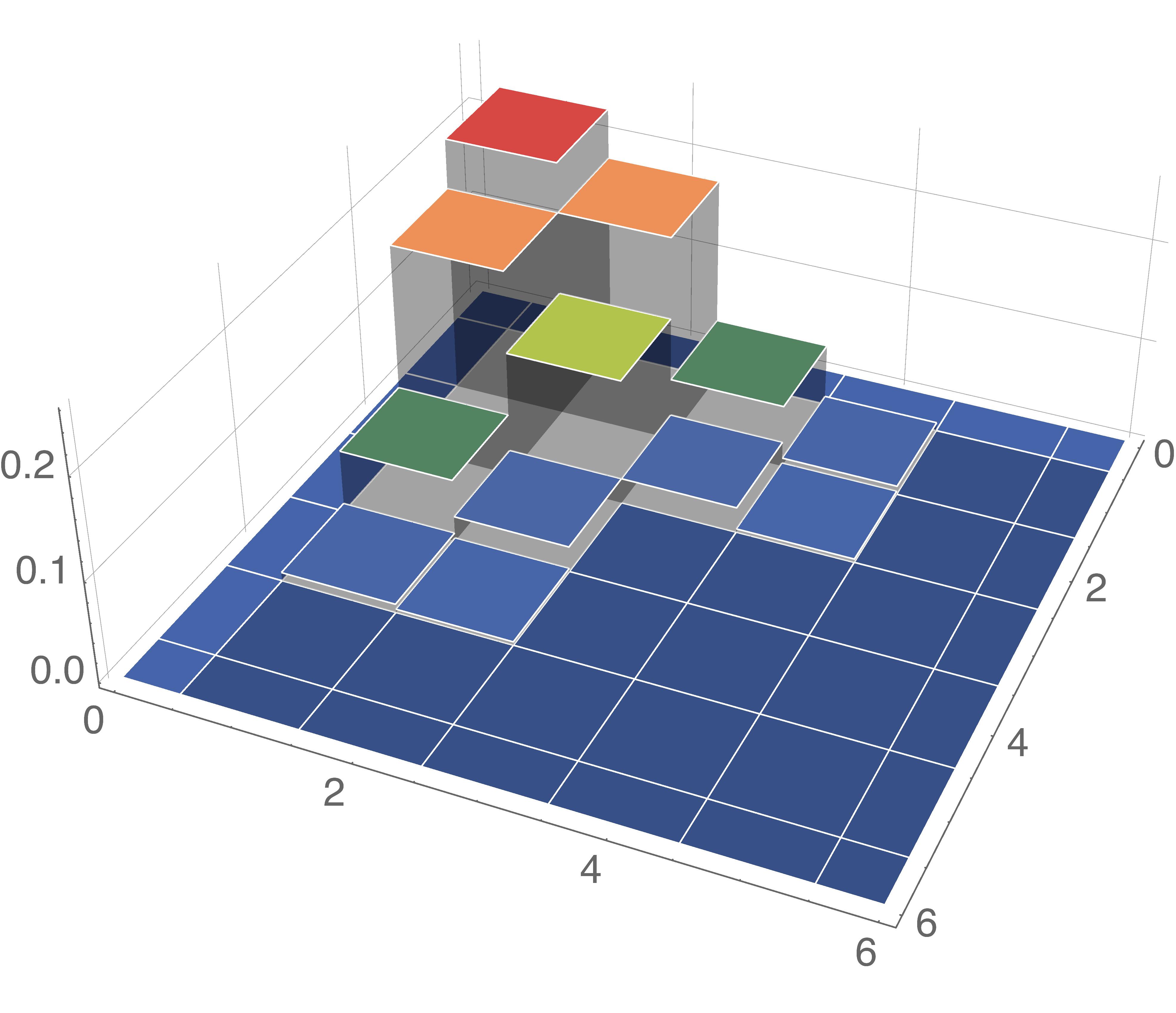}} &
\subfloat[64 detectors, 75\% QE]{\includegraphics[width = .5\columnwidth]{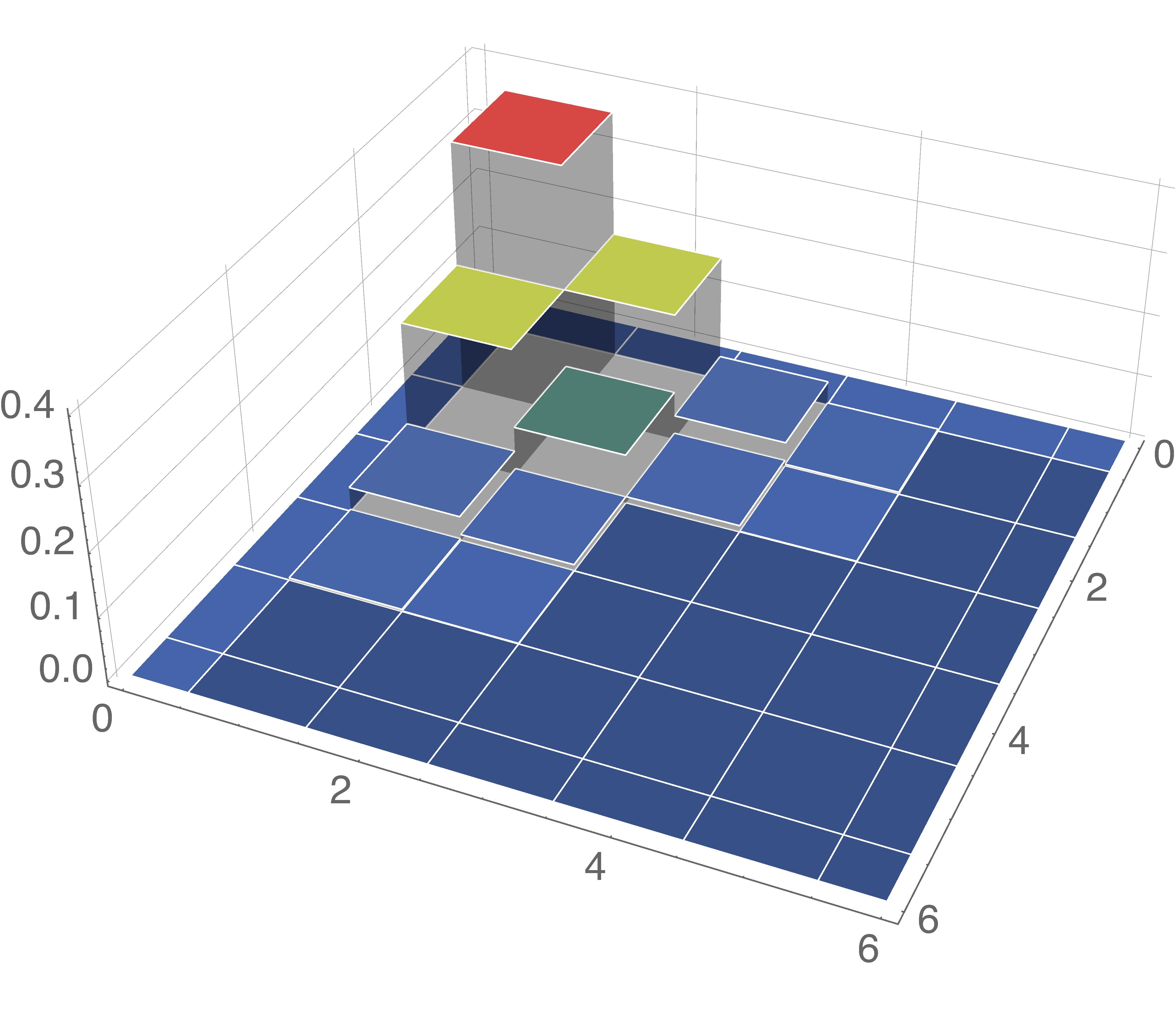}}
\end{tabular}
\caption{\label{retroNOON}Plots of the probability of retrodicted photon number for a NOON state heralding setup using multiplexed detection. Although the most probable case is the desired $|1,1\rangle$, its individual probability can be quite low, which leads to a low fidelity with the desired NOON state. The bottleneck in this case is quantum efficiency: even increasing the number of detectors from 4 to 64 does not perform as well as increasing the quantum efficiency from 60\% to 75\%.}
\end{figure}

For comparison, in Fig.~\ref{retroNOON2} we plot the retrodiction probabilities for a non-multiplexed measurement.
\begin{figure}[h!]
\begin{tabular}{cc}
\subfloat[1 detector, 75\% QE]{\includegraphics[width = .5\columnwidth]{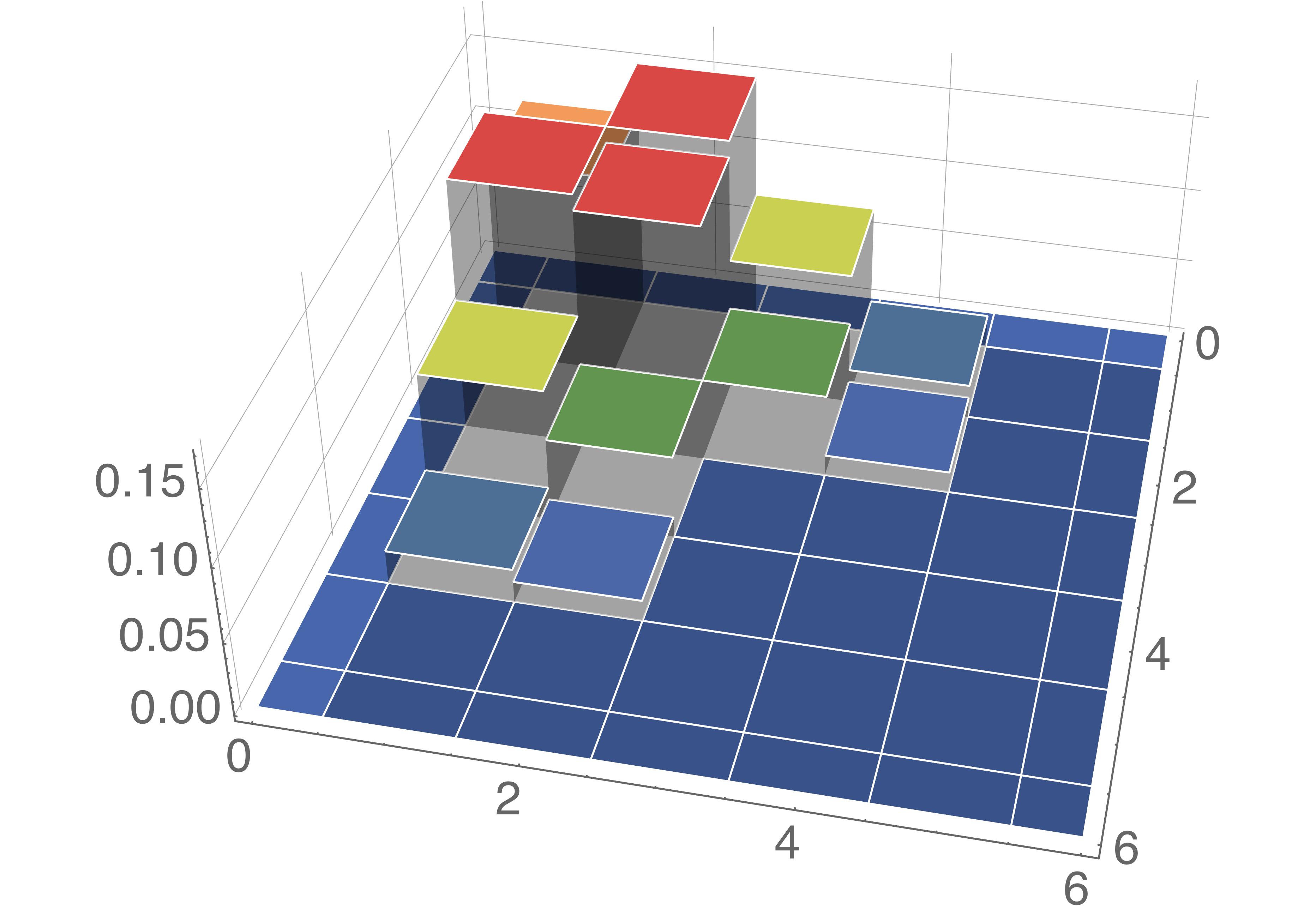}} &
\subfloat[1 detector, 100\% QE]{\includegraphics[width = .5\columnwidth]{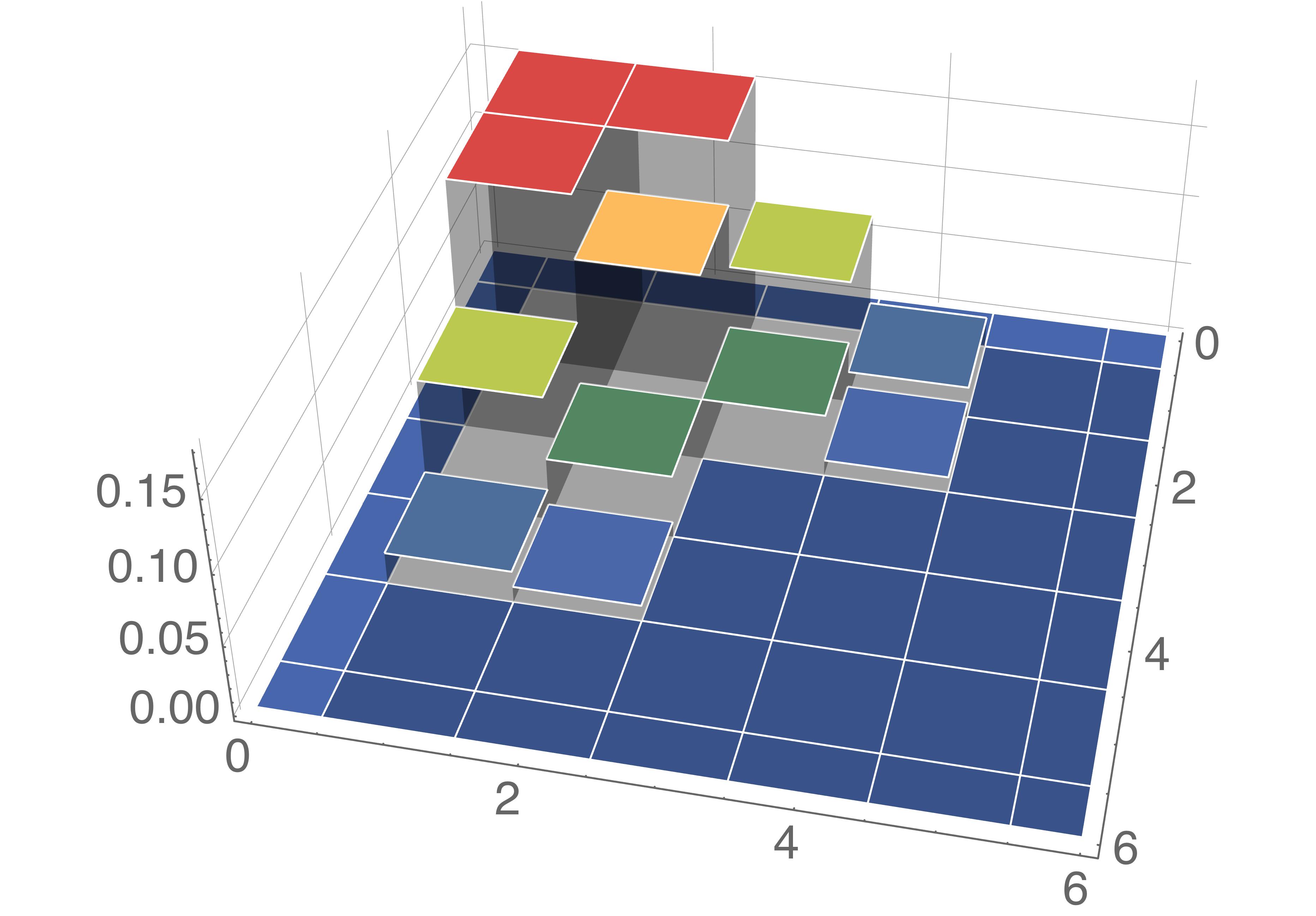}}
\end{tabular}
\caption{\label{retroNOON2}(left) A pair of realistic detectors are likely to lie: if they report a single click each, the state was more likely to be $|1,2\rangle$ or $|2,1\rangle$. (right) Even a pair of ideal (100\% quantum efficiency) detectors yields equal probability for the states $|1,1\rangle$, $|1,2\rangle$ and $|2,1\rangle$.}
\end{figure}

\subsection{Example 2: single photon heralding from squeezed vacuum}

We now consider an example of single photon heralding from a two-mode squeezed vacuum, which is performed by producing photons in pairs and heralding one by detecting the other. Such two-mode state can be generated by pumping a nonlinear crystal with an intense coherent laser pulse. The output of the process is a state in the following form:
\begin{align}
\hat S(\zeta)|0,0\rangle=\sum_{n=0}^\infty e^{in\phi}\frac{\sinh(g)^n}{\left(\sinh(g)^2+1\right)^{\frac{n+1}{2}}}|n,n\rangle,
\end{align}
where $\zeta=g e^{i\phi}$ is the squeezing parameter. For small enough values of the gain $g$ one can indeed ignore components with photon number larger than 1, but if the gain  is too large the heralded state can contain more than 1 photon. If such states were further used for crucial applications such as quantum cryptography, they would be vulnerable for example to the photon number splitting attack. Could a multiplexed detection scheme make for a better heralded single-photon source?
First note that the amplitudes of the two-mode squeezed vacuum follow a thermal distribution, if we recognize that $\sinh(g)^2$ is the mean photon number per mode. Then, we apply Eq.~\eqref{bayes} to find the retrodicted photon number distribution, which we plot for a few examples in Fig.~\ref{retroSqueezed}. Note that as the gain increases, the probability of the various number states levels off and becomes stable. %One can apply Eq.~\eqref{thermal} to show that even ideal on/off detectors will eventually lead to a mixture of photon numbers.
\begin{figure}[h]
\begin{tabular}{cc}
\subfloat[4 detectors, 60\% QE]{\includegraphics[width = .5\columnwidth]{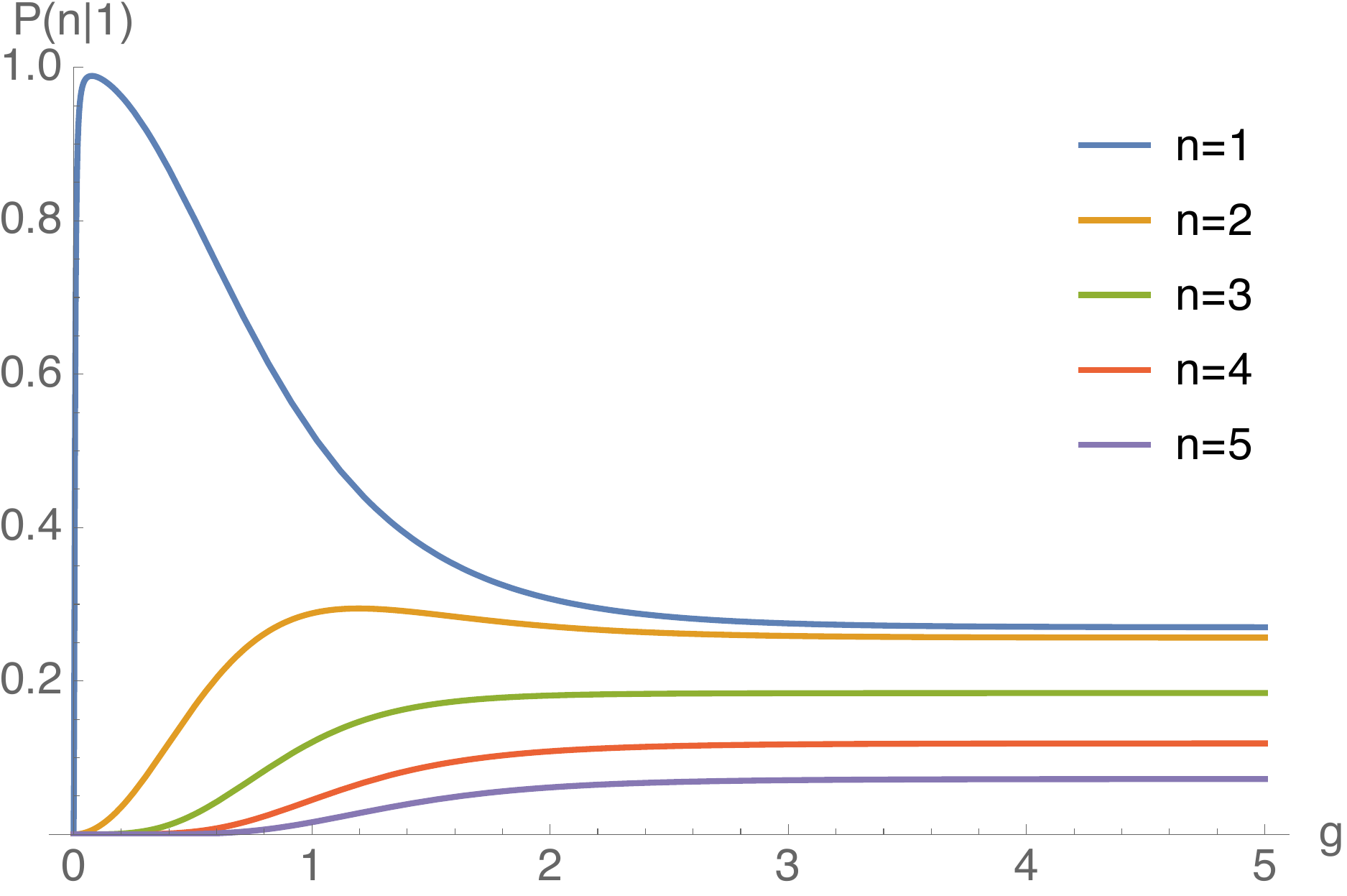}} &
\subfloat[4 detectors, 75\% QE]{\includegraphics[width = .5\columnwidth]{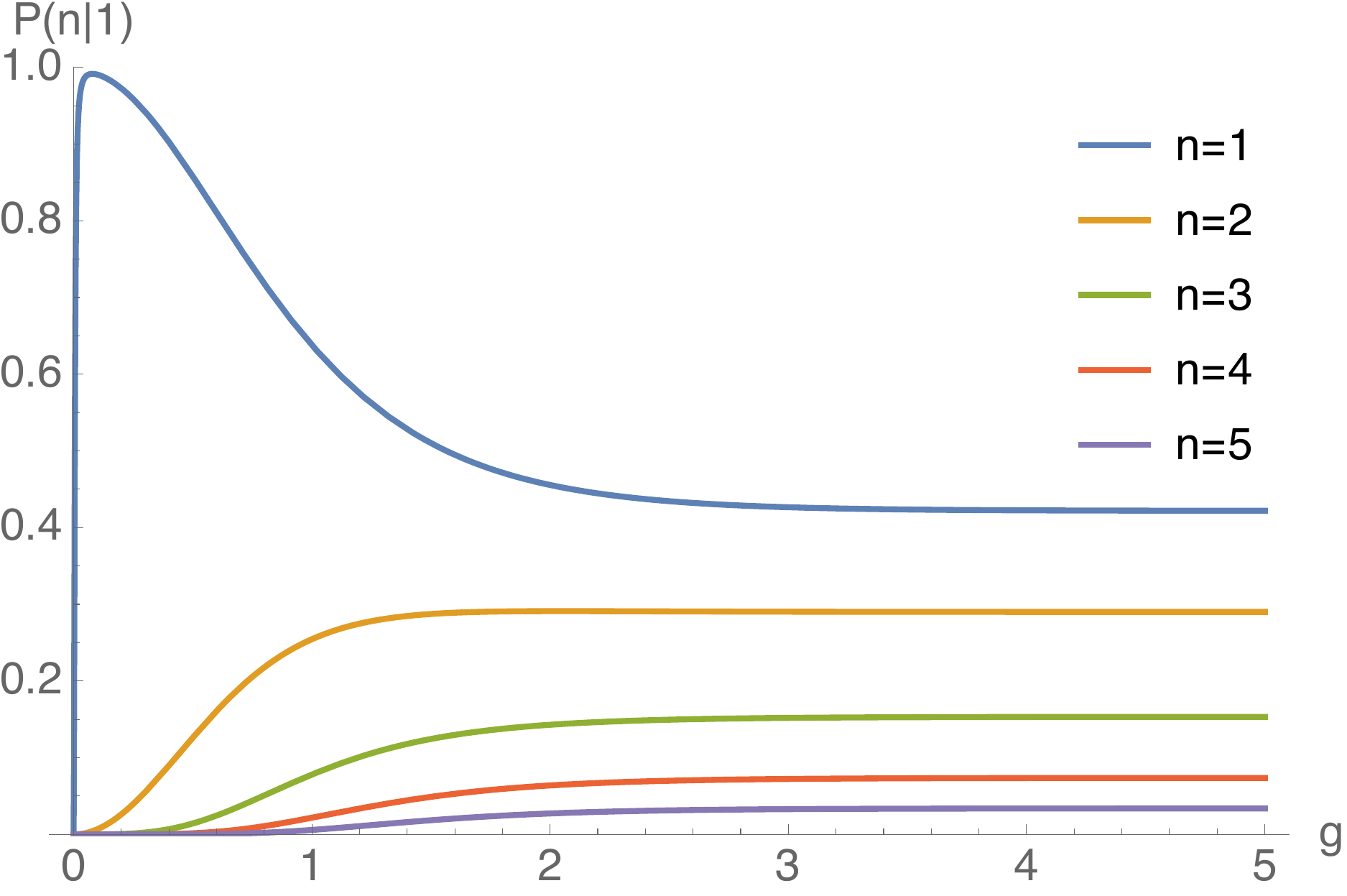}} \\
\subfloat[100 detectors, 60\% QE]{\includegraphics[width = .5\columnwidth]{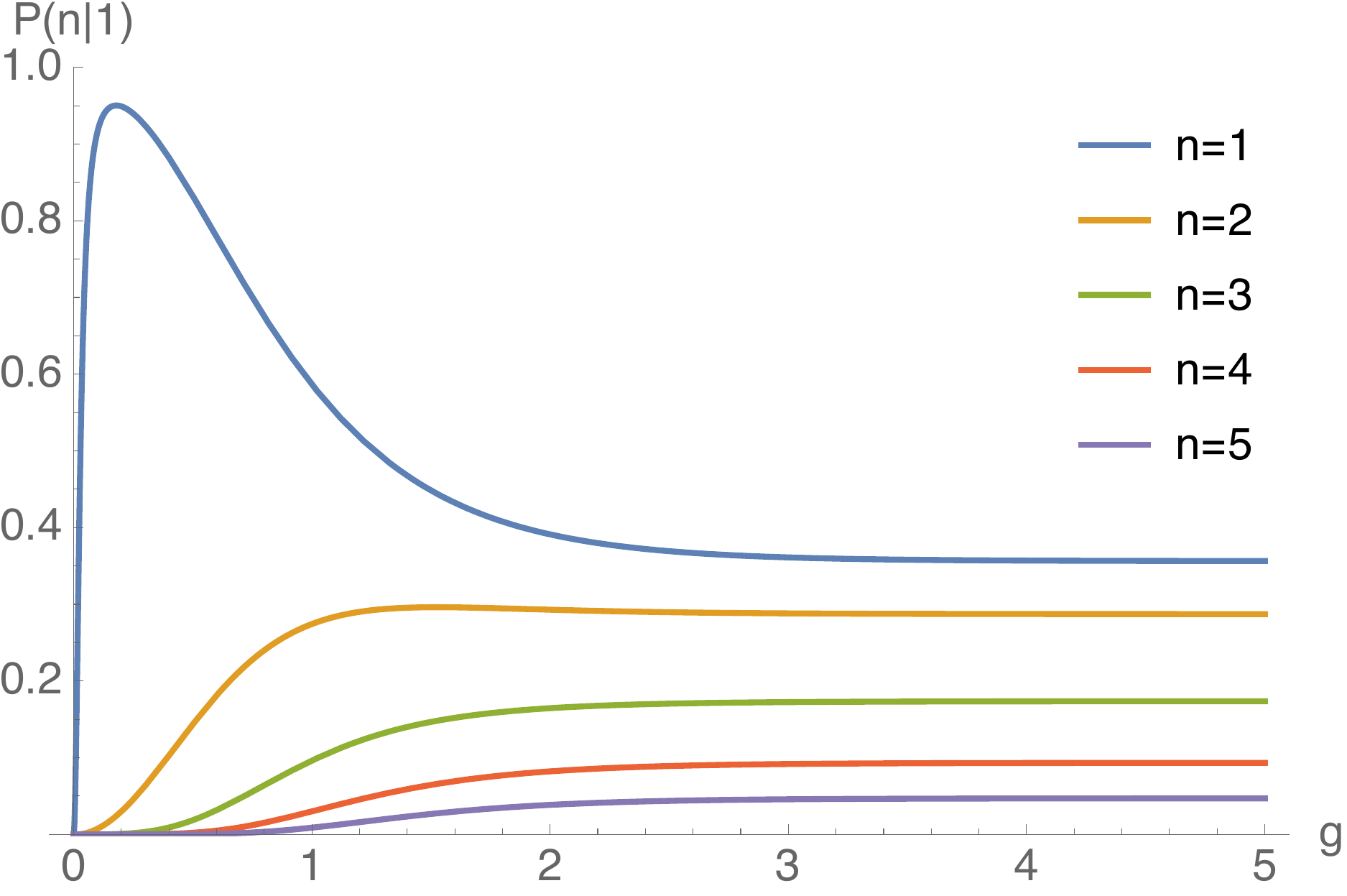}} &
\subfloat[100 detectors, 75\% QE]{\includegraphics[width = .5\columnwidth]{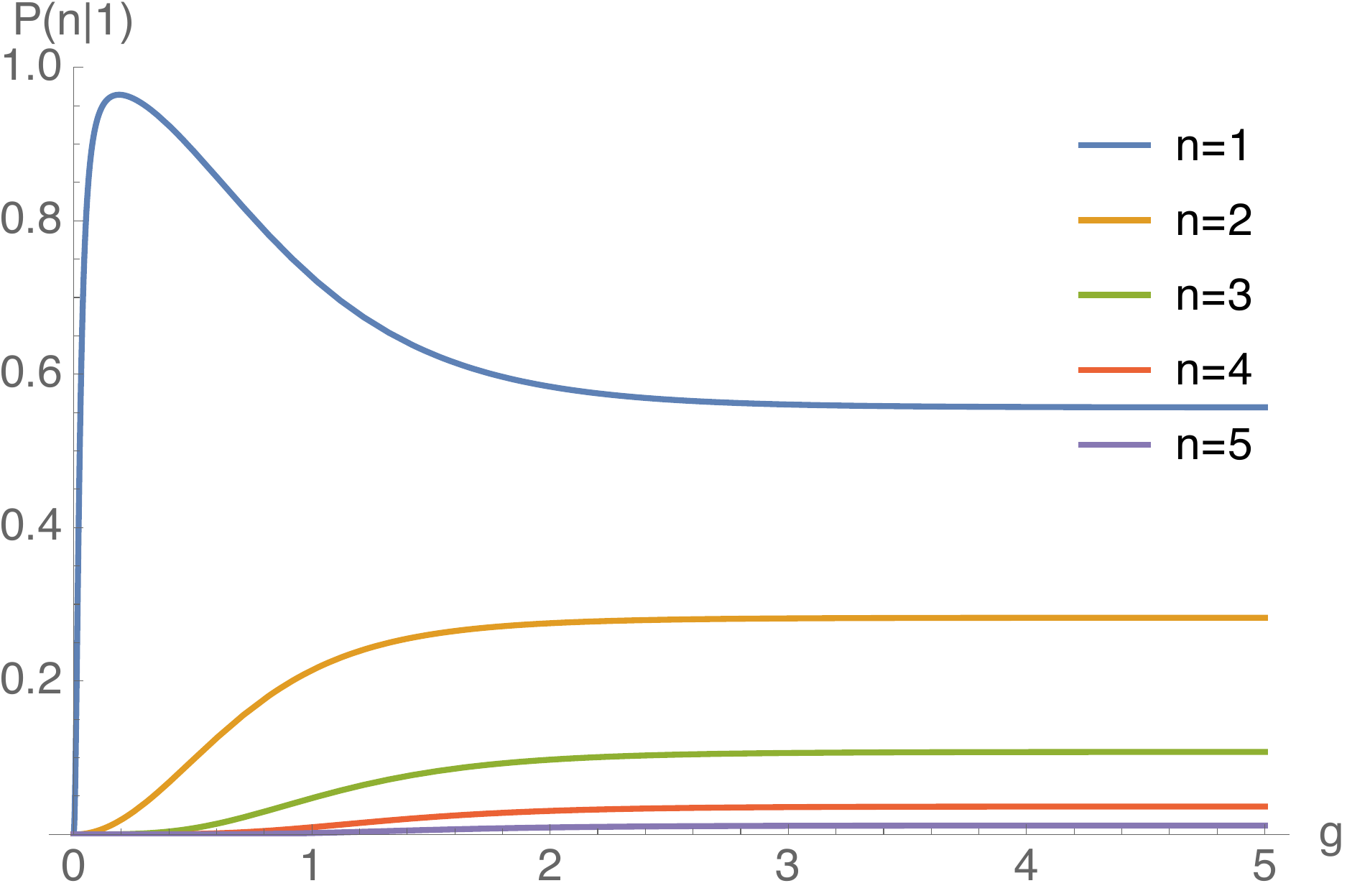}}
\end{tabular}
\caption{\label{retroSqueezed}Plots of the probability of retrodicted photon number for a squeezed vacuum state. Again, the bottleneck is quantum efficiency. Note that the probability of retrodicting a given photon number becomes constant as the gain $g$ increases.}
\end{figure}

\section{conclusions and outlook}
In conclusion, we have shown the most unbiased way of analyzing a detection event in a multiplexed measurement scheme, taking noise and efficiency into account. The corollary of our theorem can apply even to realistic situations if some conditions on the noise are met, which can be very advantageous as it is computationally much simpler to implement than the full theorem.

Our results can be applied also to optical engineering issues such as on-chip denoising in consumer imaging devices, where multiple pixels can fill an Airy disk and can be used to retrodict the intensity more accurately. There are still interesting questions to be asked, for instance whether it is possible to find closed form solutions of Eq.~\ref{bayes} for useful priors when the quantum efficiency is not unity, or if there is a reasonable way of relaxing the assumption of uniform illumination. We leave these to a future work.

\section{Acknowledgements}
This work was supported by the Canada Excellence Research Chairs program.

\bibliography{multiplexingBibliography}
\bibstyle{unsrt}

\end{document}